\@citea\NAT@hyper@{%
     \NAT@nmfmt{\NAT@nm}%
     \hyper@natlinkbreak{\NAT@aysep\NAT@spacechar}{\@citeb\@extra@b@citeb}%
     \NAT@date}}
\@citea\NAT@nmfmt{\NAT@nm}%
\NAT@spacechar\NAT@hyper@{\NAT@date}}{}{}
\@citea\NAT@hyper@{%
     \NAT@nmfmt{\NAT@nm}%
     \hyper@natlinkbreak{\NAT@spacechar\NAT@@open\if*#1*\else#1\NAT@spacechar\fi}%
       {\@citeb\@extra@b@citeb}%
     \NAT@date}}
\@citea\NAT@nmfmt{\NAT@nm}%
\fi\NAT@hyper@{\NAT@date}}
\newtheorem{corollary}{Corollary}
\newtheorem{theorem}{Theorem}
\shorttitle{Boundary Effects on Ideal Fluid Forces}
\title{Boundary Effects on Ideal Fluid Forces and Kelvin\textquotesingle s Minimum Energy Theorem}
\author{Kyle I. McKee\aff{1}
  \corresp{\email{kimckee@mit.edu}}
  }
\affiliation{\aff{1}Department of Mathematics, Massachusetts Institute of Technology,
Cambridge, MA 02139, USA
}
\begin{document}

\maketitle
\begin{abstract}
The electrostatic force on a charge above a neutral conductor is generally attractive. Surprisingly, that force becomes repulsive in certain geometries \cite[]{levin2011}, a result that follows from an energy theorem in electrostatics. Based on the analogous minimum energy theorem of \citet[]{kelvin1849}, valid in the theory of ideal fluids, we show corresponding effects on steady and unsteady fluid forces in the presence of boundaries. Two main results are presented regarding the unsteady force. First, the added mass is proven to always increase in the presence of boundaries. Second, in a model of a body approaching a boundary, where the unsteady force is typically repulsive \citep[\S 137]{Lamb1975}, we present a geometry where the force can be attractive. As for the steady force, there is one main result: in a model of a Bernoulli suction gripper, for which the steady force is typically attractive, we show that force becomes repulsive in some geometries. Both the unsteady and steady forces are shown to reverse sign when boundaries approximate flow streamlines, at energy minima predicted by Kelvin's theorem.
\end{abstract}

\section{Introduction}
It is generally stated that the electrostatic force on a charge above a conductor is attractive \citep[pp. 99]{Griffiths}.
However, \citet[]{levin2011} showed that the force may be repulsive in certain geometries (see figure \ref{fig:compare_electromag}C). Repulsion occurs when the electric field energy depends non-monotonically on the charge-conductor separation distance. Non-monotonicity occurs when conductor geometries resemble natural contours of electrical potential, as follows from the energy theorem of Thomson (Lord Kelvin) \citep[pp. 53]{jackson_classical}. An analogous energy theorem due to \citet[]{kelvin1849} holds in an ideal fluid. Whereas repulsion occurs in electrostatics when conductors resemble equipotential lines, we show surprising effects on fluid forces when boundaries resemble streamlines. In this paper, we analyse ideal fluid forces in such geometries, through consideration of both unsteady and steady components of force \citep[pp. 17]{sedov}.

The unsteady force on a submerged body is relevant during transient motions. It is often characterized by an effective added mass that limits large accelerations \citep[]{Newman2018, mckee2019} and  alters the natural vibration frequencies of submerged structures \citep{valentin2014,Newman2018}. Moreover, unsteady forces play a critical role in swimming mechanisms \citep[]{Saffman1967,Childress1981,Weymouth2013,Steele2017} of cephalopods and biomimetic robots \citep[]{Serchi_ScienceRobot}. 

The steady force is relevant even when there is no acceleration, and vanishes (along the direction of motion) in an infinite fluid according to d'Alembert's paradox \citep[404-405]{Batchelor2000}. The steady force is generally non-zero in higher connectivities; for example, there is a non-zero steady force between two interacting circular cylinders \citep[]{Wang2004,Tchieu2010}. Notably, the steady force is exploited for industrial applications in non-contact Bernoulli grippers \citep[]{giesen2013,davis2008}, in which a fluid source creates a region of high velocity and low pressure, resulting in an attractive lift force that can be used to manipulate objects (see \cite{vidref} for a video).

In the presence of boundaries, ideal fluid forces have been studied in various contexts. \citet[pp. 215]{Basset1888a} gave the exact added mass of a cylinder in a concentric confining cylinder. \citet[\S 93]{Lamb1975} solved the spherical version of that problem, which was verified experimentally by \cite{ackermann1964}. Confinement effects were studied further in the context of nuclear reactor core oscillations at
the Argonne National Laboratory\citep[]{Chung1984,Wambsganss1974}. \cite{brennen1982} then provided an extensive review of added mass and analysed boundary effects in various geometries. Later, \cite{Wang2004} presented analytical formulae for the fluid forces in the two-cylinder problem. 
% \cite{nair2007} derived approximate forms for the added mass in the two-cylinder problem that \cite{Kanso2009} used later in a kinematic study of spring-coupled cylinders.
More recently, \cite{Tchieu2010} formulated the two-body problem using a conformal map approach and revisited the two-cylinder problem.

In all these studies, boundaries were found to increase the added mass, relative to the boundary-free problem. One might wonder if there exists a boundary geometry in which the added mass of a body can actually be decreased. If this were possible, strategic boundary placement could give large accelerations during transient motions. We prove that no such boundary geometry exists, a result that follows from the minimum energy theorem of \cite{kelvin1849}. We then use Kelvin's minimum energy theorem to make a direct analogy with the electrostatic results of \cite{levin2011}. This analogy is discussed in detail in \S \ref{nonmon} and illustrated in figure \ref{fig:compare_electromag}.
%and \citet[]{levin2010casimir}

Specifically, the unexpected repulsive forces found by \citet[]{levin2011} stem from a non-monotonic dependence of the electric field energy on the conductor-charge distance. If the energy is non-monotonic, then the force must take both positive and negative values, since force is related to the gradient of energy. Using an energy theorem of electrostatics, \citet[]{levin2011} showed that a non-monotonic energy results when conductors resemble contours of potential. By analogy, we pose the question: \emph{Does the ideal fluid kinetic energy of a system always depend monotonically on its separation distance to a boundary?} We answer this question in the negative and analyse the effect of energy non-monotonicity on ideal fluid forces. 
 
 The remainder of this paper is arranged as follows.  In \S \ref{kelvenerg}, we state Kelvin's minimum energy theorem along with an important corollary regarding boundaries. In \S \ref{nonmon}, we use the results of \S \ref{kelvenerg} to demonstrate that the fluid kinetic energy can be non-monotonic in the boundary separation distance, if boundaries approximate streamlines. \S \ref{unsteady} is dedicated to unsteady force effects. In \S \ref{addmass}, we show that boundaries cannot decrease the added mass. In \S \ref{nonmomunsteady}, we analyse the unsteady force by revisiting a calculation of \citet[\S 137]{Lamb1975}, but with a streamline-approximating boundary. In \S \ref{steady}, we analyse the steady force through two models of a Bernoulli gripper, which we solve exactly using the framework of \citet[]{crowdy2020solving}. In both \S \ref{nonmomunsteady} and \S \ref{steady}, forces reverse sign near energy minima predicted by Kelvin's minimum energy theorem, when boundaries approximate streamlines. 
 
 %First Section
 \section{Ideal Fluid Energy and Kelvin's Theorem}\label{energy}
\subsection{Kelvin's Minimum Energy Theorem}\label{kelvenerg}
 We begin by stating the minimum energy theorem of \cite{kelvin1849}, which will be generalized thereafter. The statement is as follows.
\begin{theorem}\label{thm1}
Consider an ideal fluid domain $D\subseteq \mathbb{R}^2$, with velocity field $\boldsymbol{v}(\boldsymbol{x})=\bnabla \phi(\boldsymbol{x})$ satisfying $\nabla^2\phi=0$ with no-flux boundary conditions on $\partial D$, where $\phi$ is single-valued. 
%That is, $\vec{v}\cdot \hat{n}=\vec{U}\cdot \hat{n}$ for $\vec{x}\in \partial D$ where $\vec{U}$ is the boundary velocity and $\hat{n}$ the surface unit normal. 
Further assume the volume of each boundary does not change in time and the velocity vanishes at infinity. Then any other incompressible flow $\boldsymbol{q}(\boldsymbol{x})$, satisfying the boundary conditions on $\partial D$, will possess kinetic energy greater than or equal to that of $\boldsymbol{v}$,
\begin{equation}\label{eq:theoremkelvin}
     \frac{1}{2} \rho \int_{D}||\boldsymbol{v}||^2 dV \leq \frac{1}{2} \rho \int_{D}||\boldsymbol{q}||^2 dV,
\end{equation}
when $||\boldsymbol{v}||^2$ contains no non-integrable singularities. \end{theorem}

When there exist source singularities in $D$, the integrals in (\ref{eq:theoremkelvin}) are not well-defined. This point is important since our model in \S \ref{steady} involves point sources. We reserve a discussion of this technicality for \S\ref{validitysource} and Appendix \ref{reg_proc}. In these analyses, we show that all results of \S \ref{energy} still hold if the flow possesses a source singularity, as long as the energy is defined appropriately. Until \S \ref{steady}, we will continue to assume that there are no non-integrable singularities in the flow. We now present the proof of Kelvin's minimum energy theorem as stated in Theorem \ref{thm1}, when the flow possesses no singularities.

\begin{proof}
We now prove Theorem \ref{thm1}. We choose to decompose $\boldsymbol{q}$ according to $\boldsymbol{q}=\boldsymbol{v}+\boldsymbol{w}$, where $\boldsymbol{w}$ is finite in $D$. Since $\boldsymbol{v}$ already satisfies the appropriate boundary conditions, $\boldsymbol{w}$ must satisfy the no-flux condition $\boldsymbol{w}\cdot\boldsymbol{n}=0$ on the boundary, $\boldsymbol{x}\in\partial D$. Noting that $(||\boldsymbol{q}||^2-||\boldsymbol{v}||^2)=2\bnabla \phi \cdot \boldsymbol{w}+||\boldsymbol{w}||^2$, we find that
\begin{equation}\label{eq:theoremkelvinsource2}
    \frac{1}{2} \rho \int_{D}\left(||\boldsymbol{q}||^2-||\boldsymbol{v}||^2\right) dV =    \rho\left(\int_{C}\phi \boldsymbol{w\cdot n}dA-\int_{D} \phi \bnabla \cdot \boldsymbol{w} dV\right)+\frac{1}{2} \rho \int_{D} ||\boldsymbol{w}||^2dV,
\end{equation}
where $C$ is equal to $\partial D$, since we took $\phi$ to be single valued. Then the first integral on the right hand side vanishes by the no-flux condition on $\boldsymbol{w}$. The second integral on the right hand side vanishes by the incompressibility of $\boldsymbol{q}$. Meanwhile, the last term is non-negative.
\end{proof}
% See \cite{saad2017} for a different extension of Kelvin's theorem. 
In general, there is no unique solution to the Laplace problem with Neumann boundary conditions: solutions are only determined up to an unspecified amount of circulation around each boundary. In the above proof, assuming that $\phi$ was a single-valued function specified a particular solution to the Laplace problem. This solution is defined by the property that the flow $\nabla \phi$ possesses no circulation in the sense that $\int_{\gamma}\nabla \phi\cdot \boldsymbol{dl}=0$ around any closed loop $\gamma$. Hence, Theorem \ref{thm1} is applicable to the solution which possesses no circulation. As was noted by \citet{gonzalez2022variational}, the correct conclusion is that the minimal kinetic energy solution is the one possessing no circulation; in their paper, it is argued that the true solution is the one minimizing curvature. However, in the present paper, we focus on the zero circulation solution. 

In Appendix \ref{App_kelvCirc}, we present a generalization of Theorem \ref{thm1} to the case where arbitrary circulations may exist around each flow boundary. When circulation is present, the above proof is modified because the contour $C$ cannot be taken equal to $\partial D$; instead, $C$ must be chosen not to cross branch cuts associated with a multivalued $\phi$. It turns out that Theorem \ref{thm1} is still applicable even when there is circulation, provided that a particular flux condition is satisfied across branch cuts (see Appendix \ref{App_kelvCirc}). 
%A different flux condition was found by \citet{saad2017} which allows Kelvin's theorem to be extended to problems with porous boundaries. The form of their flux condition is apparently similar to ours, except our condition measures the flux through branch cuts while theirs measures the flux into boundaries. 

We emphasize again that in the present paper, we restrict our attention to the behaviour of the zero circulation solutions, to which Theorem \ref{thm1} applies. We proceed by showing that an added stationary boundary never decreases the energy. This corollary will be important in establishing the result of \S \ref{nonmon}, regarding energy non-monotonicity.

\begin{corollary}\label{corol}
Suppose a stationary impenetrable body $C$, with boundary $\partial C\subset D$, is added to the previous flow. Then a potential flow solution exists with velocity $\bnabla \phi_C$ defined over $D \backslash C$, where $\nabla^2 \phi_C=0$. Furthermore, the kinetic energy of that flow is greater than the boundary-less potential flow solution,
\begin{equation}\label{eq:coroll}
     \frac{1}{2} \rho \int_{D\backslash C} ||\bnabla \phi_C||^2 dV \geq \frac{1}{2} \rho \int_{D} ||\boldsymbol{v}||^2dV.
\end{equation}
\end{corollary}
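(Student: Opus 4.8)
The plan is to apply Theorem \ref{thm1} directly, treating the flow $\bnabla\phi_C$ in the new domain $D\backslash C$ as a \emph{competitor} field for the original boundary-less problem on $D$. The key observation is that any admissible flow in $D\backslash C$ can be extended to all of $D$ in a way that still satisfies the hypotheses of Kelvin's theorem, provided the extension is incompressible and respects the original boundary conditions on $\partial D$. The added body $C$ is stationary, so the correct no-flux condition there is $\bnabla\phi_C\cdot\boldsymbol{n}=0$ on $\partial C$.

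First I would define a candidate velocity field $\boldsymbol{q}$ on the full domain $D$ by setting $\boldsymbol{q}=\bnabla\phi_C$ on $D\backslash C$ and extending it into the interior region $C$ by zero (or, if a smooth divergence-free extension is needed, by any incompressible field whose normal component matches the boundary data on $\partial C$). Because $\bnabla\phi_C$ satisfies the no-flux condition $\bnabla\phi_C\cdot\boldsymbol{n}=0$ on $\partial C$, the zero extension is consistent: no flux crosses $\partial C$, so $\boldsymbol{q}$ remains divergence-free across that interface in the weak sense, and it still satisfies the original boundary conditions on $\partial D$ and the decay condition at infinity. Thus $\boldsymbol{q}$ is an admissible competitor for the minimum energy problem associated with the \emph{original} flow $\boldsymbol{v}$ on $D$.

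Next I would invoke Theorem \ref{thm1} with this $\boldsymbol{q}$, giving
\begin{equation}\label{eq:coroll_apply}
\frac{1}{2}\rho\int_{D}||\boldsymbol{v}||^2\,dV \leq \frac{1}{2}\rho\int_{D}||\boldsymbol{q}||^2\,dV.
\end{equation}
Since $\boldsymbol{q}$ vanishes on $C$ and equals $\bnabla\phi_C$ elsewhere, the right-hand side reduces to $\tfrac{1}{2}\rho\int_{D\backslash C}||\bnabla\phi_C||^2\,dV$, which immediately yields \eqref{eq:coroll}. The existence of $\phi_C$ itself follows from standard existence theory for the Neumann problem on the perforated domain $D\backslash C$ with the stated boundary data.

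The main obstacle I anticipate is ensuring that the zero-extension $\boldsymbol{q}$ genuinely satisfies the hypotheses of Theorem \ref{thm1} rather than merely being piecewise smooth. In particular, the theorem's proof relied on integrating by parts against a single-valued $\phi$, so I must check that introducing the interface $\partial C$ does not generate a spurious boundary contribution or a distributional divergence. The no-flux condition on $\partial C$ is exactly what kills the normal jump across the interface, so the weak divergence of $\boldsymbol{q}$ picks up no singular surface term there; this is the crucial point that makes the extension legitimate. A secondary technical care is that $||\boldsymbol{q}||^2$ inherit the no-non-integrable-singularity condition, which holds automatically since $\boldsymbol{q}$ is bounded ($\bnabla\phi_C$ is regular in $D\backslash C$ and zero inside $C$).
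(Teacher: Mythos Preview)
Your proposal is correct and follows essentially the same route as the paper: extend $\bnabla\phi_C$ by zero inside $C$, note that the no-flux condition on $\partial C$ makes this extension an admissible incompressible competitor on all of $D$, and apply Theorem~\ref{thm1}. Your discussion of the interface and weak divergence is more careful than the paper's, but the argument is the same.
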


\begin{proof}
The ideal flow, in the region $D \backslash C$, has a solution by standard existence theorems \citep[]{thomson1848theorems}. That solution has no flux into the region $C$ and a perfectly valid choice of an incompressible flow $\boldsymbol{q}$ defined over all $D$, as defined in Theorem \ref{thm2}, is
\begin{equation}\label{eq:corollareq}
  \boldsymbol{q}(\boldsymbol{x}) = \left\{
    \begin{array}{ll}
      \bnabla \phi_C & \boldsymbol{x}\in D\backslash C \\
      \boldsymbol{0} & \boldsymbol{x}\in C.
  \end{array} \right.
\end{equation}
This flow has precisely the kinetic energy of the potential flow solution in the presence of the boundary. However, by Theorem \ref{thm1}, the flow possesses more energy than the boundary-less flow. Note that $\partial C$ has measure zero, $\boldsymbol{v}$ is defined over $D$, and $\bnabla \phi_C$ over $D\backslash C$.
\end{proof}
In the proof of Theorem \ref{thm1}, it may have appeared that the reason $\boldsymbol{q}$ has more energy than $\boldsymbol{v}$ is either that the former is rotational or possesses non-zero circulation. However, we have demonstrated a third possibility via corollary \ref{corol}. Namely, if $\boldsymbol{q}$ is irrotational and possesses zero circulation, but is subject to additional boundary conditions, it will also possess an energy greater or equal to that of $\boldsymbol{v}$. This statement is understood most  simply by appealing to a variational argument; see, for example, \citet[pp. 434]{whitham1974linear} for a deeper discussion of the variational formulation of potential flow. The general argument is as follows. A solution to the Laplace equation $\phi$ can be viewed as a function that extremizes the energy functional $S[\phi(\boldsymbol{x})]=(\rho/2)\int{\left(\nabla\phi\cdot\nabla \phi\right)}dV$, so that $\delta S=0$, subject to a given set of constraints. In other words, $\phi(\boldsymbol{x})$ can be thought of as the result of a constrained optimization of $S[\phi(\boldsymbol{x})]$. In potential flow, the extremization occurs at a minimum of $S[\phi(\boldsymbol{x})]$ and the constraints are given by the boundary conditions of the flow. It follows that imposing additional boundary conditions (constraints) on the constrained optimization problem can never result in a better minimization of $S[\phi(\boldsymbol{x})]$.

This is in contrast to the electrostatic problem with a neutral conducting boundary. In the electrostatic problem, surface charge is allowed to distribute itself on the surface of the conductor, as long as the net charge is zero. The presence of surface charge affects the relevant energy functional. Hence, the addition of a conducting boundary presents an additional degree of freedom in the optimization of the electrostatic energy functional, and so neutral conducting boundaries decrease (or keep constant) the electrostatic energy.

%According to corollary \ref{corol}, the effect of adding a boundary to a particular flow is always to increase (or keep constant) the total fluid kinetic energy. From a variational perspective, this result is intuitive. The added boundary acts as an additional constraint that the flow must satisfy. It is then natural that the constrained flow always has kinetic energy more than or equal to the unconstrained flow, since constrained flow configurations account for only a subset of unconstrained flow configurations. (\ref{eq:corollareq}). 

\subsection{Non-Monotonic Kinetic Energy}\label{nonmon}
We now demonstrate that the fluid kinetic energy need not vary monotonically with the distance between a fluid system and a newly introduced boundary $\partial C$, a result that follows from corollary \ref{corol}. We define a fluid system by a collection of boundary conditions on $\partial D$ as in Theorem \ref{thm1}. A stationary boundary is introduced mathematically as an additional no-flux boundary $\partial C$, as in corollary \ref{corol}. As usual, slip is allowed in an ideal fluid. The boundary-separation distance is defined as the distance between a chosen point attached to $\partial C$ and one in $\partial D$. We now demonstrate by example, referring to figure \ref{fig:unsteadygeom}, as follows. 

Consider $D$ to be the exterior to a unit circle that translates vertically at unit velocity. We choose a boundary $\partial C$ comprising a pair of infinitely thin arcs resembling streamlines, which are drawn in three distance configurations, labelled $A,B,C$. The boundary-separation distance, labelled $x$, is taken as the vertical distance from the centre of the circle to the bottom of the boundary. We call the kinetic energy of the system $T(x)$. When $x\rightarrow \infty$, the boundary-less energy is recovered. At other distances, $T(x)\geq T(\infty)$ by corollary \ref{corol}. However, when $x=x_B$, the boundary does not perturb the cylinder flow, as it coincides precisely with streamlines, and $T(x_B)=T(\infty)$. Therefore, $T(x_B)$ corresponds to a local minima of the kinetic energy as a function of the boundary-separation distance. Hence, the kinetic energy dependence need not be monotonic. 

Similarly, one can conclude that the kinetic energy of the configuration in figure \ref{fig:compare_electromag}D is non-monotonic in the boundary separation distance, $y$. Since the perfect slits do not perturb the source flow when $y=0$, the kinetic energy $T(y)$ satisfies $T(0)=T(\infty)$. At other distances, $T(y)\geq T(\infty)$. One can then deduce the schematic plot of the kinetic energy given in $\ref{fig:compare_electromag}B$. A main goal of our investigation will be to analyse the physical manifestation of the change in sign of the slope of $T(y)$, in the highlighted region $y\gtrsim 0$.

The electrostatic energy of the configuration considered by \cite{levin2011} (see \ref{fig:compare_electromag}C) is plotted schematically in figure \ref{fig:compare_electromag}A. An energy theorem of electrostatics states that a neutral conducting boundary must decrease (or keep constant) the potential energy. The curve shape then follows from the fact that $V(0)=V(\infty)$ and $V(x)\leq V(\infty)$. Thus, the electrostatic energy theorem predicts the non-monotonic shape of the curve in figure \ref{fig:compare_electromag}A, where the highlighted window manifests as a repulsive electrostatic force, $F_{\mathrm{e}}=-dV/dx$. 

Although these discussions considered infinitely thin boundaries, we show in \S \ref{singlebody} that the corresponding effects on fluid forces persist for finite thicknesses.

\begin{figure}
  \centerline{\includegraphics[height=8.5cm]{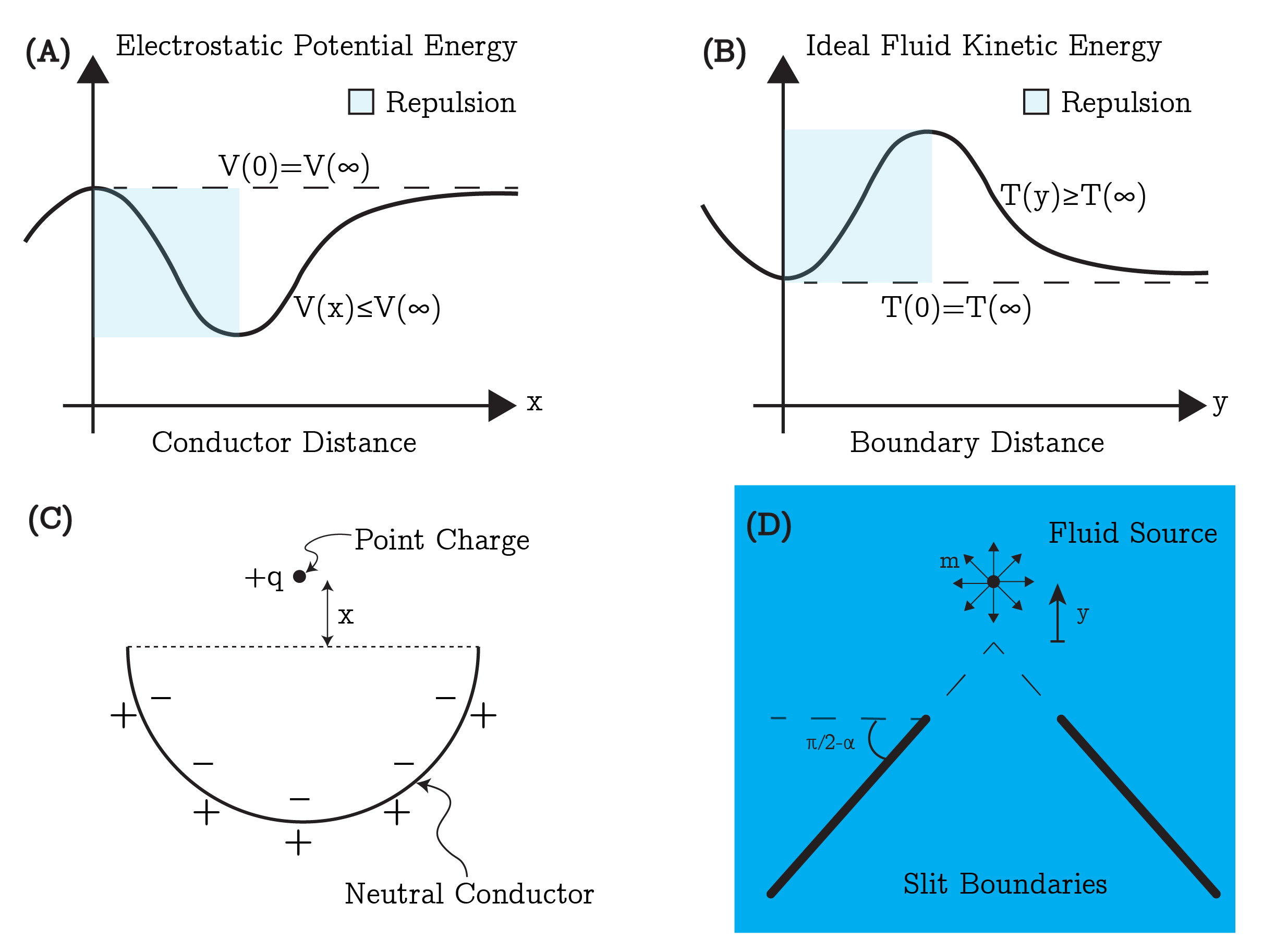}}
  \caption{\textbf{A)} Schematic illustration of the electrostatic energy of the geometry given in panel C.
  \textbf{B)} Schematic illustration of the ideal fluid kinetic energy of the geometry given in panel D.
  \textbf{C)} The electrostatic geometry considered by \cite{levin2011} is shown, where an electric point charge is placed above a semicircular neutral conducting arc. \textbf{D)} An ideal fluid geometry is given where, when $y=0$, the slits do not perturb the source flow.}
\label{fig:compare_electromag}
\end{figure}
 %End of First Section
 
 %Beginning of Second Section
\section{The Unsteady Force}\label{unsteady}
In \S\ref{addmass}, we show that the added mass is always greater than or equal to that in the boundary-less problem. An alternative derivation is given in Appendix \ref{altproof} using a conformal mapping approach, that does not appeal to Kelvin's energy Theorem. In \S\ref{nonmomunsteady}, we show the unsteady force found by \citet[\S 137]{Lamb1975} reverses sign when the boundary approximates a streamline. Herein, we ignore intrinsic body mass.

\subsection{Added Mass Cannot Decrease}\label{addmass}
Consider a single translating body in an unbounded ideal fluid. The Buckingham--Pi theorem \citep[]{buckingham1914} implies the total fluid  kinetic energy can be written as
\begin{equation}
    T_f=\frac{C_a}{2}\rho V U^2,
    \label{eq:kinetic}
\end{equation}
where $U$ is the body velocity, $C_a$ is a dimensionless constant called the added mass coefficient, and $\rho$ and $V$ are the fluid density and body volume, respectively. Considering acceleration at a constant rate $a$, (\ref{eq:kinetic}) gives the necessary work to move a distance $\delta x$, $\delta W=C_a \rho V a \delta x$. The resulting force is therefore $F=-\delta W/\delta x=-C_a \rho V a$; the reaction is equivalent to a mass augmentation of amount $C_a \rho V$. Corollary \ref{corol} showed that boundaries may not decrease the total kinetic energy $T_f$. Therefore, when the presence of a boundary is modelled as an effective change in the added mass, increasing $T_f$ implies an increased $C_a$ in (\ref{eq:kinetic}).
%so that $C_a$ may never decrease in the presence of boundaries. 

Hence, boundaries cannot be used to decrease the added mass and attain higher accelerations. The connection between boundary-induced added mass increase and Kelvin's theorem was mentioned by \citet[\S 93]{Lamb1975}. Although boundaries cannot reduce the added mass, geometries such as those described in \S \ref{nonmon} can lead to interesting fluid force effects, to be analysed in the remainder of this paper. The effects presented in the remainder of the paper rely on energy non-monotonicity as discussed in \S\ref{nonmon}, and are thus in analogy with the electrostatic results of \cite{levin2011}.
%unsteady figure, different boundaries than lamb 137
\begin{figure}
  \centerline{\includegraphics[height=5.5cm]{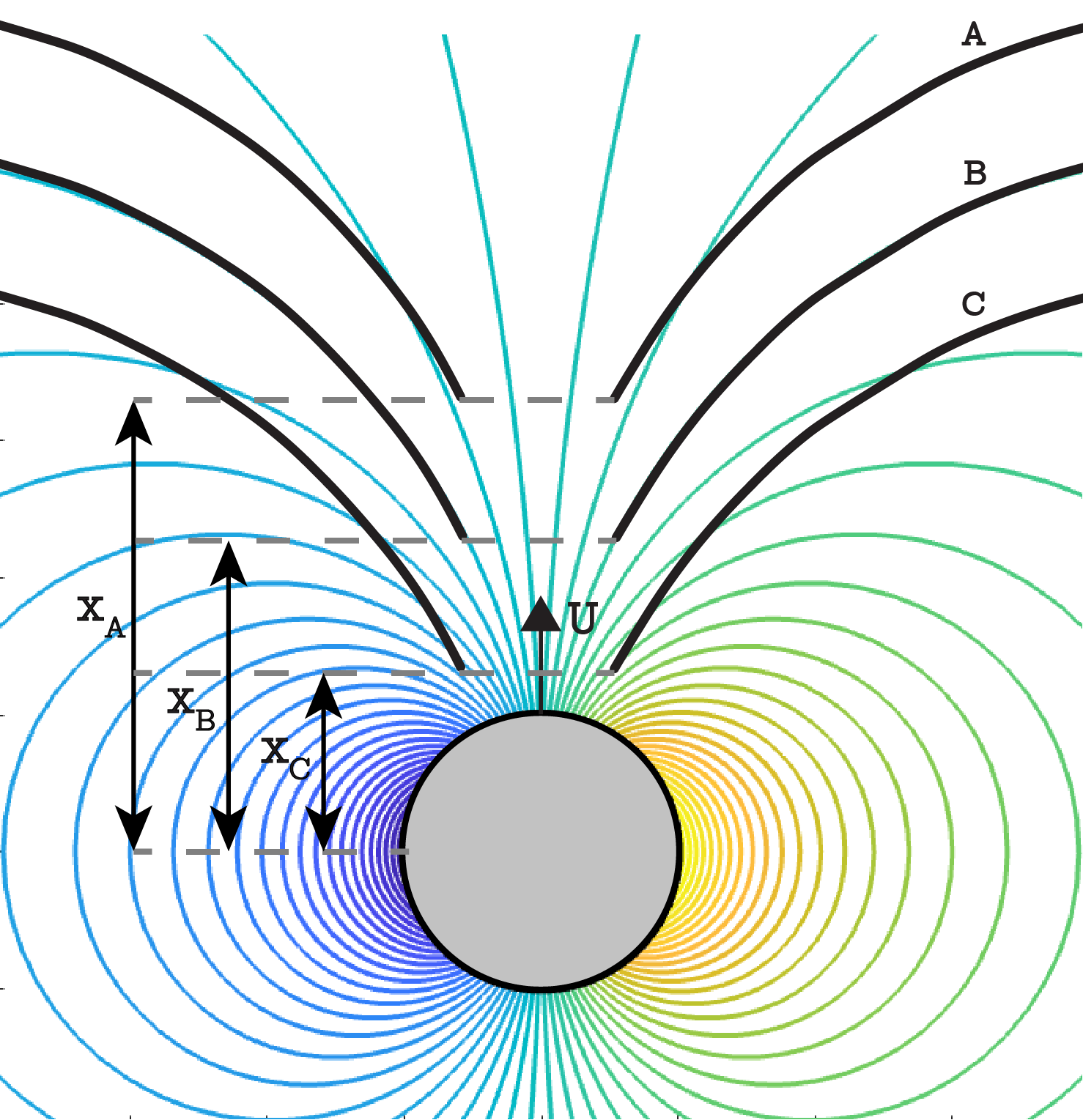}}
  \caption{A cylinder of unit radius and unit speed is pictured with its boundary-less streamlines. Configurations of a streamline-approximating boundary are superimposed and labelled $A,B,C$.}
\label{fig:unsteadygeom}
\end{figure}

\subsection{Effect of Non-monotonic Energy on Unsteady Force}\label{nonmomunsteady}
When a body translates toward a boundary, according to the added mass formulation, the energy can be expressed as in (\ref{eq:kinetic}) with $C_a=C_a(x)$, where $x$ is the separation distance to the boundary. Motion can only be taken as 1-dimensional after either assuming a constraint or by exploiting a symmetry in the $y$ direction, as is present in figure \ref{fig:unsteadygeom}. This assumption is made to simplify the calculation and illustrate our main argument. Note again that we will ignore the intrinsic mass of the cylinder. If the cylinder had mass $m_c$, then one should replace every $C_a$ with $C_a +m_c/(\rho V)$ in the following analysis.

For 1D free motion toward the boundary, the Euler-Lagrange equation is
%2\frac{d}{dt}\left(C_a(x)\dot{x}\right)-\frac{\partial C_a(x)}{\partial x}\dot{x}^2
\begin{equation}\label{eq:EOM}
    2\frac{\partial C_a(x)}{\partial x}\dot{x}^2+2C_a(x)\ddot{x}-\frac{\partial C_a(x)}{\partial x}\dot{x}^2=\frac{1}{\dot{x}}\frac{d}{dt}\left( C_a(x)\dot{x}^2 \right)=0,
\end{equation}
%\begin{equation}\label{eq:EOM}
%    C_a(x)\dot{x}^2=\mbox{Const.}
%\end{equation}
or $C_a(x)\dot{x}^2=\mathrm{const}$ since $\dot{x}\neq 0$. \citet[\S136]{Lamb1975} justified the Lagrangian formulation, and the validity of (\ref{eq:kinetic}), in the context of a body approaching a boundary with such a symmetry. In analysing the motion of a sphere toward a plane wall, \citet[\S 137]{Lamb1975} deduced from (\ref{eq:EOM}) a repulsive force since $\partial C_a/\partial x$ was always negative there. However, this need not always be the case as discussed in \S \ref{nonmon}: if the boundary coincides, at some finite $x$, with the streamlines of a traveling body then there is a local minima of $C_a(x)$. The equation of motion (\ref{eq:EOM}) implies, setting the constant to 1, that $\partial \dot{x}/\partial x=-C_a^{-\frac{3}{2}}(\partial C_a/\partial x)/2$ which switches sign at the local minima. 

Consider the geometry of figure \ref{fig:unsteadygeom}. A circle translates vertically and boundary-less streamlines, which are a function of velocity, are drawn for reference. The circle approaches a boundary that approximates flow streamlines. If, as the circle evolves according to (\ref{eq:EOM}), at some moment the free streamlines coincide with the boundary (as pictured for distance $x_B$), then the force will change sign when the boundary slits get aligned with the streamlines, $x=x_B$. In a different scenario where the circle is pushed at constant velocity, the necessary external force to maintain the velocity reverses sign at $x_B$.

\section{The Steady Force}\label{steady}
Here, we analyse the effect of energy non-monotonicity, as described in \S\ref{nonmon}, on the steady fluid force. We consider a source flow with streamline-approximating boundaries, as illustrated in figure \ref{fig:geometry}. In the simply connected problem (figure \ref{fig:geometry}A), attractive Bernoulli suction is achieved when the source is directly above the plate, giving a model of a Bernoulli suction gripper \citep[]{davis2008,giesen2013}. However, we show that the sign of the vertical force is reversed near the point where boundaries lie directly on streamlines. It is worth noting that in this 2-dimensional model, the geometry is not the typical Bernoulli gripper with axial symmetry as seen in \cite{vidref}; however the governing principle is the same. Both of the geometries of figure \ref{fig:geometry} are solved exactly; the simply connected problem is solved using conformal maps while the doubly connected problem is solved using the theoretical framework of \citet[]{crowdy2020solving}. We begin by demonstrating that the analysis of \S \ref{energy}, particularly corollary \ref{corol} and the non-monotonic energy discussion in \S \ref{nonmon}, is still applicable in the presence of a point source. 

\subsection{On the Validity of Kelvin's Theorem with Source Singularities}\label{validitysource}
In 2 dimensions, the kinetic energy of a source singularity, in an otherwise unbounded fluid, diverges logarithmically. A source at the origin has velocity $||\boldsymbol{v}|| \propto 1/r$, so that the absolute kinetic energy scales as $\int_0^{2\pi}\int_{\epsilon}^L ||\boldsymbol{v}||^2 r dr d\theta \propto \log{\left(L/\epsilon\right)}$. The integral is divergent both at the source location, as $\epsilon \rightarrow 0$, and at infinity, as $L\rightarrow \infty$. Therefore, the absolute kinetic energy of a source is not a meaningful quantity. However, forces and physical observables are related to differences, and in particular gradients, of energy. Therefore, the energy may be redefined with respect to any reference system, without altering physical predictions. In what follows, we will demonstrate a manner in which a finite energy can be defined even in the presence of sources. It is with respect to this definition of energy that the arguments of \S \ref{energy} will hold.

We note that a similar issue of energy divergence arises in the theory of electrostatics. The electromagnetic field energy of a point charge is non-integrable at the location of a charge in 3 dimensions. The divergence at the location of the source is eliminated (regularized) by choosing to measure the energy with respect to that of an isolated point source. Similarly, by choosing to measure the fluid kinetic energy with respect to that of an isolated source singularity, the divergence at the location of a fluid source can be regularized. What is left is to deal with the issue of the integral's convergence at infinity. The divergence at infinity is related to the fact that we are working in 2 dimensions. Sources of the Laplace equation are integrable at infinity in dimensions $n \geq 3$. 

The divergence at infinity may be resolved by two different approaches. The first approach involves treating the 2-dimensional problem as the limit of a 3-dimensional problem. The 2-dimensional flow due to a point source at $x=y=0$ is equivalent to the flow produced by a uniform continuum (line) of 3-dimensional sources along the entire $z$-axis. If one considers a line of sources with density $m$, in 3 dimensions, extending from $z=-L$ to $z=L$, the radial velocity profile in the $z=0$ plane is given by $v_r(r)=m/(2\pi r \sqrt{1+(r/L)^2})$; the integrated kinetic energy in the $z=0$ plane of such a source is then convergent at infinity for any finite $L$. In the presence of sources, the energy can be made finite at infinity by regulating each source according to the prescription $m/(2\pi r)\rightarrow m/(2\pi r \sqrt{1+(r/L)^2})$ for some finite regularizing parameter, $L$. 

A second approach, which does not require invoking 3 dimensions, is described by \citet[pp.125]{saffman1995vortex}. In this approach, a small circle of radius $\epsilon > 0$ around each singularity is excluded from the domain when computing the energy. The boundary at infinity is also replaced by a circle of large radius $R$. In the limit of $\epsilon\rightarrow 0$ and $R\rightarrow \infty$, the energy can be expressed as the sum of a finite part and a divergent constant $C \propto \log{\left(R/\epsilon\right)}$. Since constant shifts in energy are not observable, only the finite part of the energy needs to be retained. This procedure is reminiscent of the procedure of dimensional regularization used in quantum field theory, a connection nicely illustrated by \citet{olness2011regularization}.

The key takeaways from this section are as follows. The absolute kinetic energy of a source flow is not finite in 2 dimensions, relative to a static configuration, since the kinetic energy integral diverges both at the source location and at infinity. Measuring the energy relative to that of an isolated source eliminates the divergence at the source location. The divergence at infinity can be regulated in two ways, and a finite regularized energy can be defined. In terms of the regularized energy, all of the discussions of \S \ref{energy} apply, along with the interesting force behaviours predicted when boundaries approximate streamlines, even when there exists a source singularity in the flow. In what follows, we analyse the source flows of figure \ref{fig:geometry}, and examine the force behaviour when boundaries approximate the source streamlines.

\subsection{Single Slit and Source}\label{singlebody}
Consider an ellipse (which may degenerate to a slit) centered at the origin in the complex plane, with some orientation $\alpha$, as shown in figure \ref{fig:geometry}A. When $x=0$, the source is located directly above the plate and Bernoulli suction attracts the ellipse to the source, as shown in figure \ref{fig:plotsbernoulli}A. 
%Henceforth, we consider a fixed value of $x\neq 0$ and resolve the vertical component of force for source locations $y>0$. When $\phi=0$, the force is always of the suction type and positive for all $y>0$. 
However, when $x \neq 0$ is fixed, there is a value of $y=x\tan{\alpha}$ where the ellipse falls on a natural source streamline. For a perfect slit,  $y=x\tan{\alpha}$ corresponds to an energy minima as a function $y$, as follows from the discussion of \S \ref{nonmon}. We thus expect the vertical steady force to flip sign near this point for sufficiently slender ellipses.

\begin{figure}
  \centerline{\includegraphics[height=5.5cm]{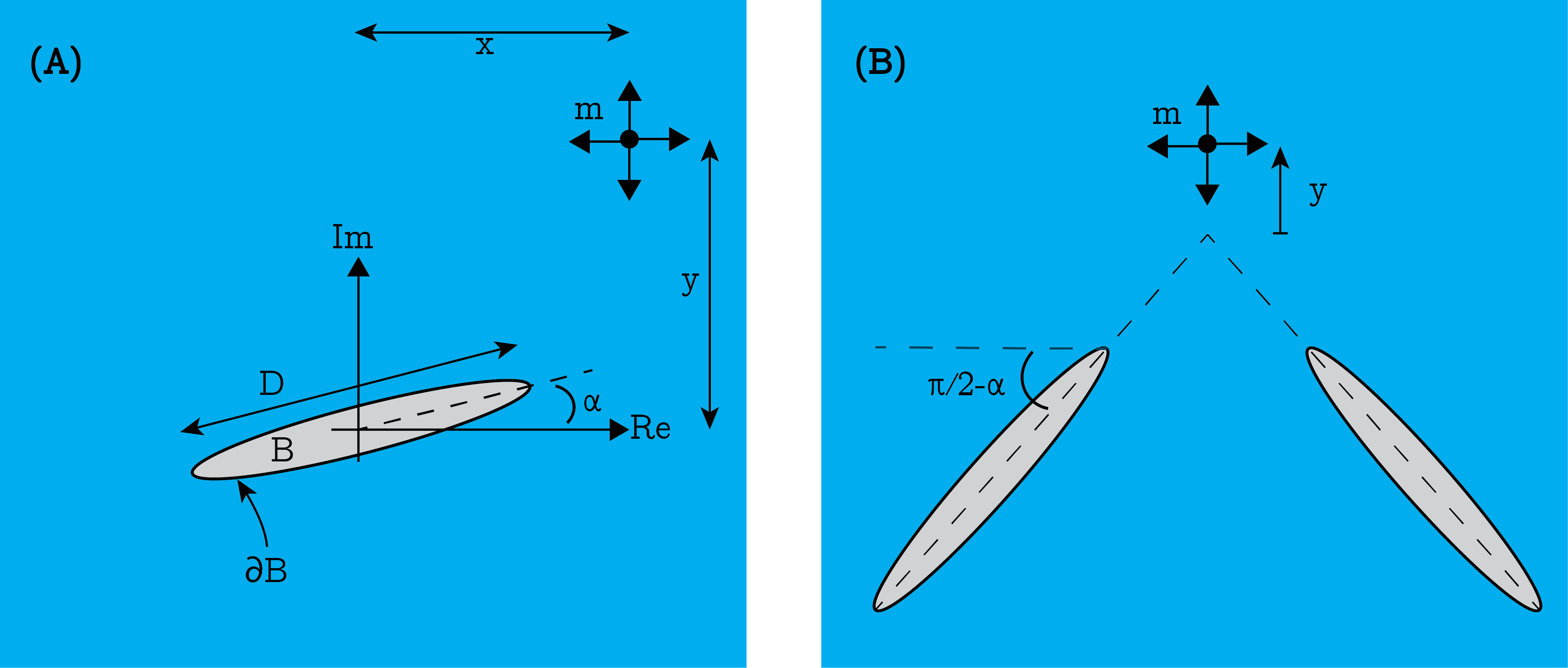}}
  \caption{\textbf{A)} Geometry of a source above a boundary. A source of strength $m$ is located at $x+\mathrm{i}y$ and a stationary ellipse centered on the origin and tilted at angle $\alpha$. When $\alpha=0$ and $x=0$, this is a 2D model of a Bernoulli gripper above a plate extending into the page.  \textbf{B)} Geometry of a source $m$ above a doubly-connected body, consisting of a pair of slits. When $\alpha=\pi/2$, this is a 2D model of a Bernoulli gripper above an object with a central hole.}
\label{fig:geometry}
\end{figure}
We employ a conformal mapping approach. The simpler problem of a source exterior to a unit disk is solved and mapped to the problem of interest (figure \ref{fig:geometry}A) by a conformal map. We denote the physical domain coordinates by $z$ and the pre-image coordinates by $\zeta$. The complex potential for a source of strength $m=2\pi$ at location $\zeta_s$, where $|\zeta_s|>1$, is $\log{\left(\zeta-\zeta_s\right)}$. The complex potential exterior to a stationary unit disc is then given by the Milne-Thomson theorem \citep[\S 6.21]{milne1962},
\begin{equation}\label{eq:thepotential}
    W=\log{\left(\zeta-\zeta_s\right)}+\log{\left((1-\overline{\zeta_s}\zeta)/\zeta\right)}.
\end{equation}
The unit disk exterior maps to the tilted ellipse exterior by a Joukowsky-type map,
\begin{equation}\label{eq:mapping}
    z(\zeta)=\left(\zeta + \frac{s}{\zeta}\right)e^{\mathrm{i}\alpha},
\end{equation}
where $s\in[0,1]$ for univalency \citep[]{Smith2008}; $s$ defines a homotopy between the circle ($s=0$) and the slit of length $D=2$ ($s=1$). Intermediate shapes are ellipses. To achieve the configuation in figure \ref{fig:geometry}A, $\alpha$ in (\ref{eq:mapping}) corresponds directly to $\alpha$ in the figure. However, the source pre-image, $\zeta_s$, must be chosen to ensure that $z(\zeta_s)\equiv z_s=x+\mathrm{i}y$. After some algebra, one finds the pre-image location,
$\zeta_s(z_s)=\left(z_s e^{-\mathrm{i}\alpha}+\sqrt{\left(z_s e^{-\mathrm{i}\alpha}\right)^2-4s}\right)/2$.

In what follows, we analyse the vertical component of force as $y$ is varied for a given $x$. The force is computed in the $\zeta$-plane since the complex potential is known there according to (\ref{eq:thepotential}). The steady force is given by the expression of \citet[equation 18]{Tchieu2010},
\begin{equation}\label{eq:forceexpress}
    F=\overline{\frac{\mathrm{i}}{2}\oint_{|\zeta|=1}{\left(\frac{dW}{d\zeta}\right)^2\left(\frac{dz}{d\zeta}\right)^{-1}d\zeta}},
\end{equation}
after setting the fluid density to be equal to 1. The vertical component is $F_s\equiv \Imag\left\{ F \right\}$. Using (\ref{eq:thepotential}) and (\ref{eq:mapping}), the residue theorem yields the vertical force on the ellipse,
\begin{equation}
    F_s=\Imag\left\{-\frac{2\pi\zeta_s^2(\zeta_s-s\overline{\zeta_s})}{(\zeta_s^2-s)^2(|\zeta_s|^2-1)}e^{-\mathrm{i}\alpha}\right\},
\end{equation}
which is plotted in figure \ref{fig:plotsbernoulli}A for the case of $x=0$ and $s=1$. The usual Bernoulli suction effect is captured there; a low pressure on the top of the plate leads to a net vertical (attractive) force. The suction persists regardless of the angle $\alpha<\pi/2$ when $x=0$. Note that taking $s<1$ does not qualitatively affect the shape of the curve.
 
We now examine the case where the slit may fall on a streamline of the source, $x\neq0$. Figure \ref{fig:plotsbernoulli}B plots the case of $x=2$ and $\alpha=\pi/4$ for various slenderness parameters, $s$. When $s=1$, the slit coincides with a natural (radial) streamline of the source at $y/D=1$. As per the discussion of \S \ref{nonmon}, this value of $y$ corresponds to a local energy minimum as a function of y, and the vertical force should thus change sign when $y/D=1$. The plot in figure \ref{fig:plotsbernoulli}B shows the force changes sign at precisely this point for a perfect slit, $s=1$. There is a local region of repulsion for $y/D>1$. For large $y/D$, the force becomes attractive again.  Meanwhile, the onset of sign-reversal of the force is shifted for an ellipse of finite slenderness ($s<1$) until the effect vanishes for $s=0.79$ where there is no repulsive region. For large $y/D$, all force plots are positive, indicating attraction.

\begin{figure}
  \centerline{\includegraphics[height=5.5cm]{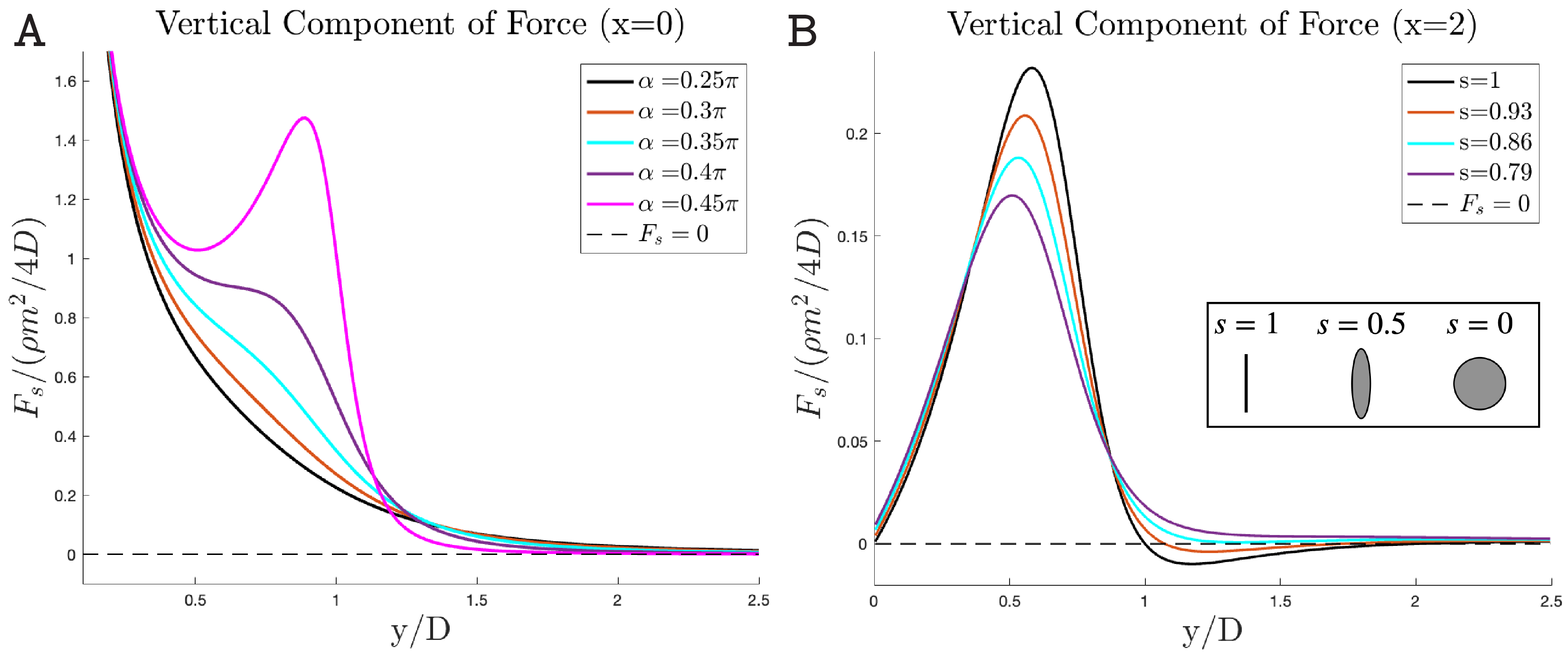}}% Images in 100% size
  \caption{\textbf{A)} Plot of the vertical steady force on the slit in the geometry shown in figure \ref{fig:geometry}A when $x=0$, $s=1$, for various $\alpha$. The force is positive for $y>0$, indicating attraction as in the standard Bernoulli gripper. \textbf{B)} Same plot when $\alpha=\pi/4$, $x=2$, for various slenderness parameters $s$. The slit  approximates a streamline when $y/D=1$, leading to an energy minima and local sign-reversal of the force only for slender ellipses ($s\approx 1$).}
  %The dotted line corresponds to the two-body problem with $\phi=\pi/2$. 
\label{fig:plotsbernoulli}
\end{figure}

\subsection{Two Slits and Source}\label{twobody}
The vertical steady force is computed by a similar procedure in the two-body geometry shown in figure \ref{fig:geometry}B. The canonical domain for the doubly connected problem is the annulus, where the problem can be solved simply and then mapped to the physical domain by a conformal map (see figures \ref{fig:confmaps}A and \ref{fig:confmaps}D). The conformal map from the annulus, $|\zeta|\in [\rho,1]$, to the two-slit geometry shown in figure \ref{fig:geometry}B is
\begin{equation}\label{eq:2bodymap_43}
    z(\zeta)=-\mathrm{i}Ae^{3\mathrm{i}\alpha}\frac{\omega(\zeta,\sqrt{\rho}e^{2\mathrm{i}\alpha})\omega(\zeta,\sqrt{\rho}^{-1}e^{2\mathrm{i}\alpha})}{\omega(\zeta,\sqrt{\rho})\omega(\zeta,\sqrt{\rho}^{-1})},
\end{equation}
where $\omega$  is the Schottky--Klein prime function defined by \citet[pp. 64,75]{crowdy2020solving}; a similar map was used by \cite{crowdy2009spreading} to analyse the Weis-Fogh mechanism. Note that we consider figure \ref{fig:geometry}B with zero-thickness slits. To accommodate a source, a sink of equal strength is included in the annulus for mass conservation, with the sink located at the pre-image of infinity, $\zeta_{\infty}=\sqrt{\rho}$. The complex potential  in the $\zeta$-plane is \citep[]{sourcesink}
\begin{equation}
    W(\zeta)=\frac{m}{2\pi}\log{\left(\frac{\omega(\zeta,\zeta_s)\omega(\zeta,\overline{\zeta_s}^{-1})}{\omega(\zeta,\zeta_{\infty})\omega(\zeta,\overline{\zeta_{\infty}}^{-1})}\right)},
\end{equation}
up to a constant that does not enter our calculation. The source location in the annulus, $\zeta_s$, must be chosen such that its image lies on the imaginary axis as in figure \ref{fig:geometry}B. Conveniently, we notice the circle $|\zeta|=\sqrt{\rho}$ maps to the imaginary axis. We proceed by resolving the vertical component of force on the boundary. By symmetry, we can simply double the force on one slit; thus, the total force is twice that given in (\ref{eq:forceexpress}). Note that the map of (\ref{eq:2bodymap_43}) has zeros at the locations of the sharp slit edges. To avoid the corresponding poles while evaluating (\ref{eq:forceexpress}), the contour may be deformed slightly into the fluid domain, as was noted by \cite{Tchieu2010}.

The derivatives $dW/d\zeta$ and $dz/d\zeta$, and hence the integrand of (\ref{eq:forceexpress}), can be written analytically in terms of the functions $K(\zeta, \alpha)=\zeta\partial \log\omega(\zeta,\alpha)/\partial \zeta$. The integrand may then be explicitly evaluated using the series representations given by \citet[pp. 278,280]{crowdy2020solving}, and easily integrated numerically. The result is plotted in figure \ref{fig:plots2body} for the case of two slits near-parallel to the real axis. 

For large positive $y$, the force on the plates is positive, indicating attraction between the source and slits. For large negative $y$, the force on the plates is negative, indicating attraction again. However, a small region of repulsion is encountered near the point $y=0$, when the source is aligned with the slits. At small distances above the point $y=0$, when $y\gtrsim 0$, the source is evidently repelled by the slits as the force on the slits is negative. Just below the point $y=0$, when $y\lesssim 0$, the force on the slits becomes positive, indicating attraction again since the source is still above the slits. When $y$ becomes sufficiently negative, so that it lies below the slits, the force on the slits again becomes negative, indicating attraction as we already noted. 

All force plots vanish when $y=0$, and the force switches from attractive to repulsive. The location of this zero is precisely predicted by the discussion of \S\ref{nonmon}: when the natural source streamlines align with the slits, at $y=0$ as seen in figure \ref{fig:geometry}B, there exists a local energy minima as a function of $y$.

When the slits are parallel to the real axis, $\alpha=\pi/2$, the force is an odd function of $y$. Interestingly, the odd symmetry is quickly disrupted for deviations from $\alpha=\pi/2$. For $\alpha<\pi/2$, the repulsive region is retained, but the repulsive force magnitude is diminished. Since decreasing $\alpha$ pushes the reference point $y=0$ further away from the plates (see figure \ref{fig:geometry}B), it is to be expected that the magnitude of the repulsive force should decrease since the source flow decays with distance. Despite its diminishing magnitude, the repulsive region persists for $\alpha < \pi/2$, as is seen in figure \ref{fig:plots2body}.

As was found in the single body calculation of \S\ref{singlebody}, we expect the repulsion found in the two-body problem to be valid up to some finite slit widths. However, we do not present computations for finite slit widths in the two-body problem.

Additionally, we note that no Kutta condition was imposed in the calculation of \S \ref{twobody}. The forces in figure \ref{fig:plots2body} should thus be interpreted as those on slightly inflated slits with zero circulation and smooth corners. If the slits were truly perfect, a Kutta condition would be necessary, which would affect the vertical force. By slightly inflating the ellipse, the singularity in the conformal map is pushed off of the object boundary, into the exterior of the fluid domain; hence, this interpretation is consistent with our method of calculating (\ref{eq:forceexpress}) by pushing the contour slightly into the fluid domain to exclude the map singularity.

\begin{figure}
  \centerline{\includegraphics[height=5.5cm]{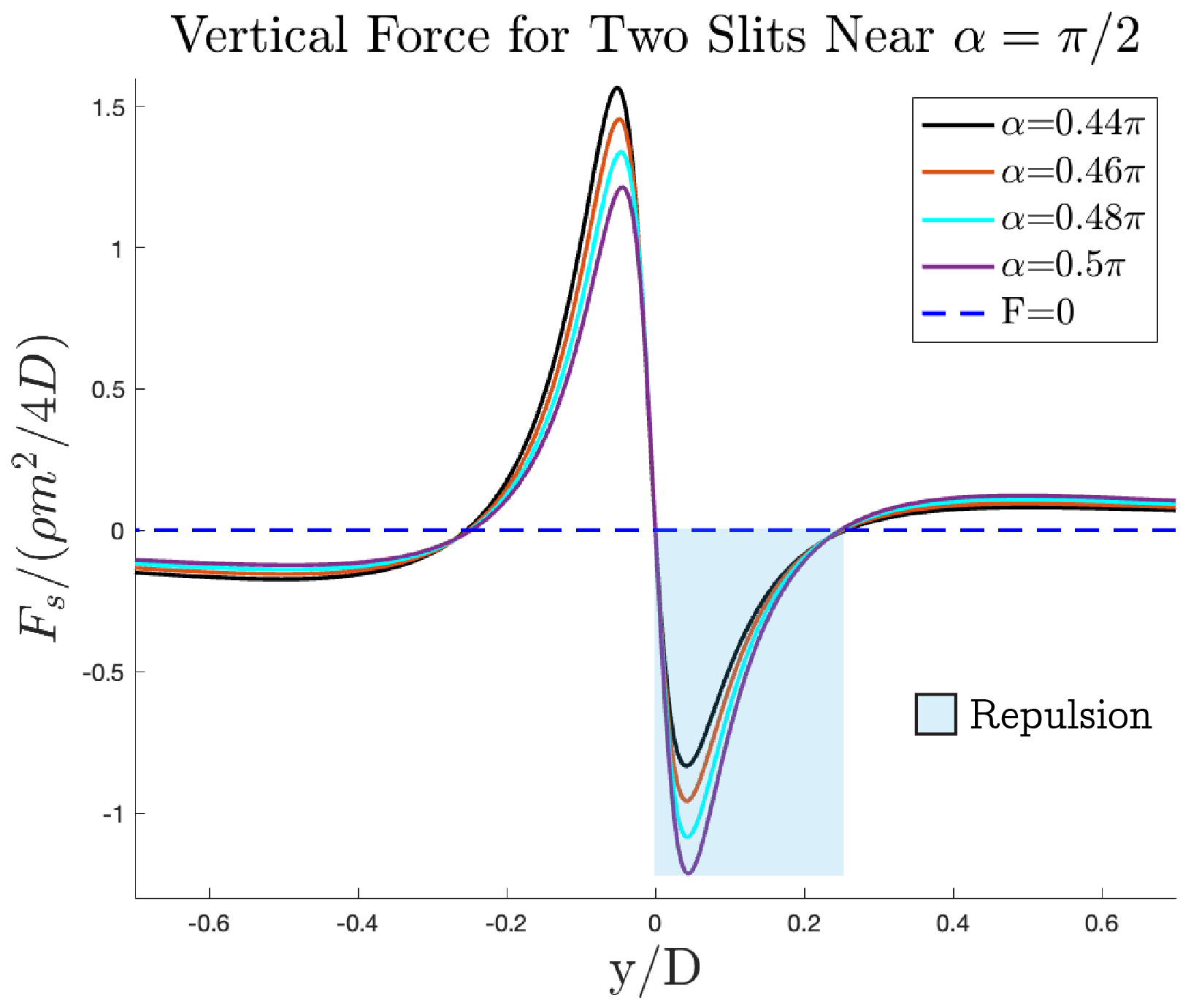}}% Images in 100% size
  \caption{Vertical force on the slits shown in figure \ref{fig:geometry}B, near $\alpha=\pi/2$ (slits on real axis). For large $|y/D|$ there is attraction as in the usual Bernoulli gripper. When $y=0$, slits lie on streamlines and there is local repulsion. For tilted slits, there is only repulsion for $y/D\gtrsim 0$ as indicated by the shaded region.}
  %The dotted line corresponds to the two-body problem with $\phi=\pi/2$.
\label{fig:plots2body}
\end{figure}

\section{Conclusion}
It follows from the minimum energy theorem of \citet[]{kelvin1849} that boundaries cannot reduce the kinetic energy and hence the added mass of a translating body. However, boundaries that resemble streamlines can create local energy minima as a function of the boundary-separation distance, resulting in non-intuitive effects on fluid forces. 

By a similar analysis in the theory of electrostatics, \cite{levin2011} predicted a maximum in the electromagnetic energy when conducting boundaries approximate potential contours, which manifested as a repulsive electrostatic force of induction. Below, we outline the analogous effects found in an ideal fluid, due to the energy minimum occurring when boundaries approximate streamlines.

In a system where \citet[\S 137]{Lamb1975} found a repulsive unsteady force, we show a transition to attraction when boundaries approximate streamlines. In two models of a Bernoulli gripper, where the steady force is typically attractive, we show that the steady force reverses sign when the boundary approximates a streamline. Effects are most prominent with slit-type boundaries, but are shown in \S \ref{singlebody} to persist for bodies with finite thicknesses. The geometry-dependent sign-reversal of the steady force in the Bernoulli gripper might be useful in future engineering design. Furthermore, similar effects may be relevant in electromagnetic power flows in 2D near-zero-index media, which were recently shown to be mathematically equivalent to ideal fluid flows \citep{liberal2020near}. \\
\\
\newline
\paragraph{\textbf{Acknowledgements.} The author is grateful for valuable discussions with Peter Baddoo, Bavand Keshavarz, Valeri Frumkin, Ousmane Kodio, and John Bush.}\\
\\
\paragraph{\textbf{Funding.} The author was funded by an MIT Presidential Fellowship during this work.}\\
\\
\paragraph{\textbf{Declaration of Interests.} } The authors report no conflict of interest.

\appendix
\section{Proof of Added Mass Increase by Conformal Maps}\label{altproof}
We prove that the added mass force may not decrease in the presence of a stationary boundary. The analysis follows that of \cite{Smith2008} and \cite{Tchieu2010}.
\subsection{Complex Variable Formulation}
Suppose a two-dimensional body $B$ translates in an ideal fluid with velocity $(V_x,V_y)$, which we write in complex notation as $V\equiv V_x+\mathrm{i} V_y$. Suppose also that there is a stationary boundary defined by $\partial D$, and fluid occupies the region between $\partial D$ and $\partial B$ (which may be finite or unbounded). The fluid flow between the boundaries is characterized by a complex potential satisfying the Laplace equation, $\nabla^2 W(z)=0$, where $z=x+\mathrm{i}y$. The velocity field is given by $U(x,y)=\overline{dW(z)}/dz=u(z)+\mathrm{i}v(z)$, where the bar denotes complex conjugation. The no-flux boundary condition takes the form $\Real\{\overline{U}n\}=\Real\{\overline{V}n\}$ for $z\in \partial B$, and $\Real\{\overline{U}n\}=0$ for $z\in \partial D$, where $n$ is the complex normal vector to the boundary.

 This problem is conveniently phrased using a conformal map. By the doubly-connected version of Riemann mapping theorem \citep[pp. 208]{goluzin}, there exists a conformal map between the annulus and a generic two-body geometry. Thus, the Laplace equation may be solved in the annulus, subject to appropriate boundary conditions, and taken to the physical plane using a conformal map \citep[]{Tchieu2010}. We take the annulus defined by $|\zeta|\in [\rho,1]$ without loss of generality. We also specify that the conformal map $z(\zeta)$ takes $|\zeta|=1$ to $\partial B$ and $|\zeta|=\rho$ to $\partial D$. The Neumann boundary conditions become
\begin{equation}\label{eq:Dirichlet_BC}
    \Imag\{W(\zeta)\}=\Imag\{\overline{V}z(\zeta)\},
\end{equation}
on each boundary; note that the second boundary is stationary so $V=0$ there. We can also take the velocity of $B$ to be purely imaginary, $V=Ui$ with $U\in\mathbb{R}$.
\begin{figure}
  \centerline{\includegraphics[height=8cm]{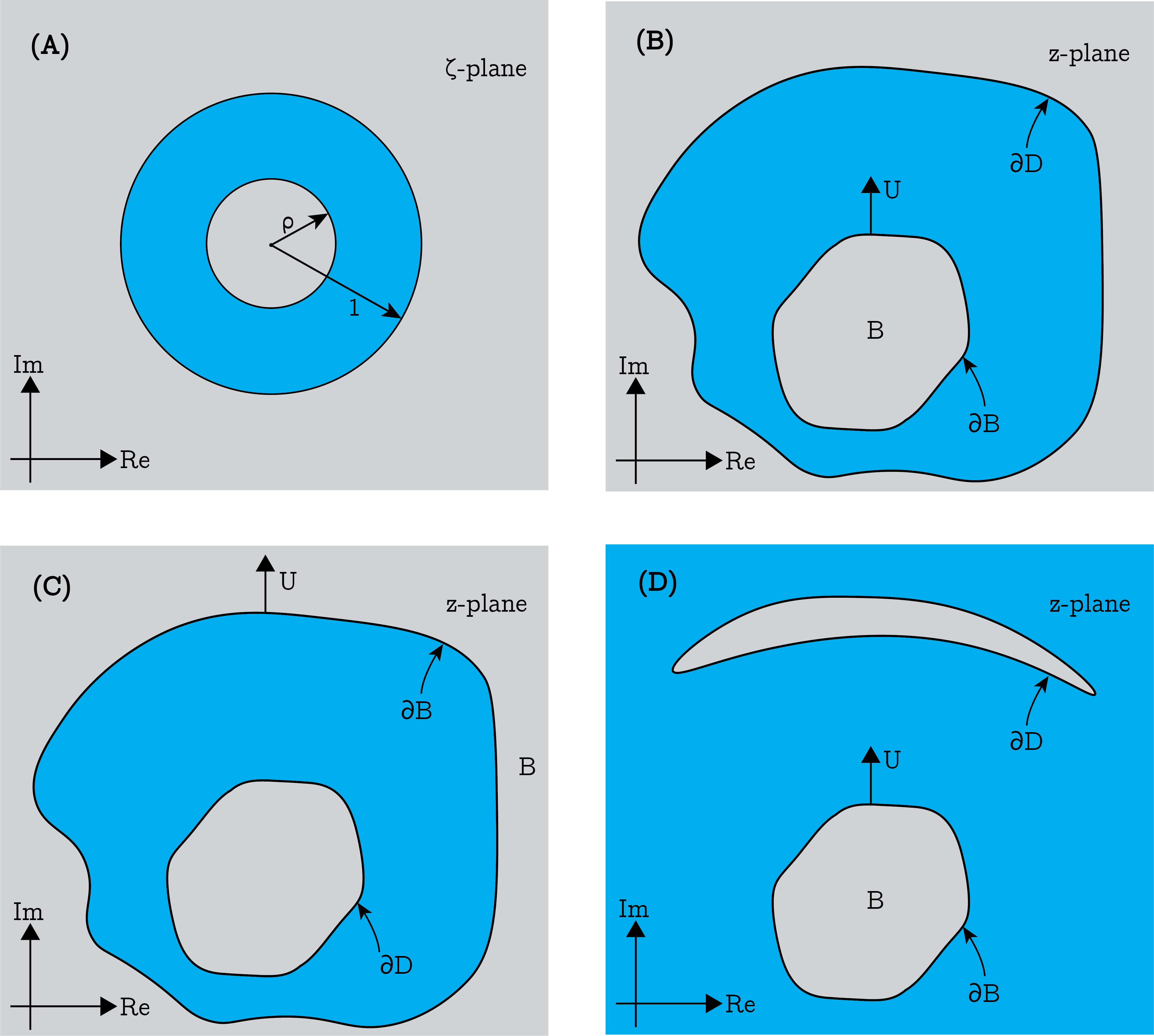}}% Images in 100% size
  \caption{Conformal Map Geometries. \textbf{(A)} The annulus geometry in the $\zeta$-plane. The conformal map $z(\zeta)$ maps the outer circle to $\partial B$ and the inner circle to $\partial D$. \textbf{(B)} Example of inverted confined geometry ($b=0$). \textbf{(C)} Example of a confined geometry without inversion. \textbf{(D)} Example of an unbounded fluid geometry ($b\neq 0$), the conventional two-body problem.}
\label{fig:confmaps}
\end{figure}
The conformal map may be written as a general Laurent series plus a simple pole in the fluid domain,
\begin{equation}\label{eq:CM_Laurent}
    z(\zeta)=\sum_{k\in \mathbb{Z}}a_k\zeta^k+\frac{b}{\zeta-c},
\end{equation}
where $c$ is located in the fluid domain, $|c|\in (\rho,1)$. For the confined problem, there is no fluid at infinity and thus no singularity, i.e., $b=0$. The area enclosed by $\partial B$, $A_1$, may be computed by Green's theorem though it never enters direct calculations.
\subsection{Force Calculation and Added Mass Increase}
The unsteady force on $B$ is given by the Blasius formula in the $\zeta$-plane,
\begin{equation}\label{eq:unsteadyforce}
    F_{US}=\mathrm{i}\frac{d}{dt}\left( \oint_{|\zeta|=1}{z(\zeta)\frac{dW(\zeta)}{d\zeta}d\zeta} \right)+A_1\ddot{z},
\end{equation}
where the fluid density has been set to 1 \citep[]{Tchieu2010}. The second term contributes negatively to the added mass, while the first has been seen in examples \citep{Tchieu2010} to contribute positively so the added mass is overall a positive quantity. Herein, we will analyse (\ref{eq:unsteadyforce}) and show that the boundary will never decreases the added mass relative the boundary-less problem. The transient problem with $U=0$ and $\dot{U}=1$ defines the added mass coefficient; equivalently, we solve for the force (\ref{eq:unsteadyforce}) by setting $U=1$ and eliminating the time derivative. The added mass coefficient opposing the acceleration is then defined (noting that acceleration is along the imaginary axis) by the relation,
\begin{equation}\label{eq:addedmass_co}
    C_a(\rho)=-\Imag\left\{\frac{F_{US}}{ A_1}\right\}=-\frac{2\pi}{A_1}\Imag\left\{{\mbox{Res}_{|\zeta|<1}} \left( z(\zeta) \frac{dW(\zeta)}{d\zeta} \right) \right\}-1,
\end{equation}
by the residue theorem. The residue term in (\ref{eq:addedmass_co}) may be expressed in terms of the coefficients of the mapping after imposing the boundary conditions. The boundary data is only non-zero on the unit circle $|\zeta|=1$, and is expanded there in a Laurent series as
\begin{equation}\label{eq:data_sing}
    \Imag\{\overline{V}z(\zeta)\}=-\frac{1}{2}\left(\sum_{n\in \mathbb{Z}}\left(a_n+\bar{a}_{-n} \right)\zeta^n+\frac{b}{\zeta-c}+\frac{\bar{b}\zeta}{1-\bar{c}\zeta}\right)=\sum_{k\in \mathbb{Z}}b_k \zeta^k,
\end{equation}
where we have set $V=\mathrm{i}$ as noted earlier. The residue theorem gives the data coefficients,
\begin{equation}\label{eq:b_coeffs}
  b_n = \left\{
    \begin{array}{ll}
      -\frac{1}{2}\left( a_n+\bar{a}_{-n} +\bar{b}\, \bar{c}^{\;n-1}\right) & n>0 \\
      -\frac{1}{2}\left( a_n+\bar{a}_{-n} +b c^{-n-1}\right) & n\leq0,
  \end{array} \right.
\end{equation}
noting that $|c|<1$. The complex potential is obtained by matching the data of (\ref{eq:b_coeffs}) onto $W(\zeta)$ at the boundaries. We expand in a Laurent series, $W(\zeta)=\sum_{k\in \mathbb{Z}}w_k\zeta^k$. As noted by \citet[]{Tchieu2010}, the imaginary part of $W(\zeta)$ can be written as an analytic function using the Schwartz conjugate on each boundary giving two equations,
\begin{equation}\label{eq:analyticw}
    \Imag\{W(\zeta)\}=\frac{1}{2i}\left( W(\zeta)-\bar{W}(\rho_i^2\zeta^{-1}) \right)=\frac{1}{2i}\sum_{k\in \mathbb{Z}}\left(w_k-\overline{w}_{-k}\rho_{i}^{-2k}\right)\zeta^k.
\end{equation}
where $\rho_i\in\{\rho,1\}$. The null boundary data on $|\zeta|=\rho$ relates the coefficients by $\rho^{2k}w_k=\bar{w}_{-j}$. Meanwhile, matching the $\rho_i=1$ equation of (\ref{eq:analyticw}) onto the data in (\ref{eq:b_coeffs}) yields $w_k=2ib_k/(1-\rho^{2k})$, which is explicitly
\begin{equation}\label{eq:w_coeffs}
    w_n=\left\{
    \begin{array}{ll}
      -\mathrm{i}\left( a_n+\bar{a}_{-n} +\bar{b}\, \bar{c}^{\;n-1}\right)/(1-\rho^{2n}), & n>0 \\
      -\mathrm{i}\left( a_n+\bar{a}_{-n}+bc^{-n-1}\right)/(1-\rho^{2n}) & n\leq0. \end{array} \right.
\end{equation}
The derivative of the potential is then obtained by term-by-term differentiation, $dW/d\zeta=\sum_{k\in\mathbb{Z}}kw_k\zeta^{k-1}$. Now $C_a(\rho)$ can be computed explicitly using the residue theorem, in terms of the coefficients of the map of (\ref{eq:CM_Laurent}); the explicit expression for $C_a$ is suppressed for brevity.
We now show that the effect of a boundary is never to decrease the added mass (\ref{eq:addedmass_co}), for a given shape $B$, relative to the boundary-less problem. Note that $\rho\rightarrow 0$, corresponds to the boundary-less problem \citep{crowdy2020solving}. The problem can be phrased as: \emph{can one choose coefficients $\{a_i,b\}$ in} (\ref{eq:CM_Laurent}) \emph{such that $C_a(\rho)/C_a(0)<1$ for some $\rho\in (0,1)$?} We begin by assuming $C_a(\rho)/C_a(0)<1$ holds and proceed toward a contradiction.

After letting $B\equiv bc^{k-1}$, some algebra reveals that $C_a(\rho)/C_a(0)<1$ is equivalent to
\begin{eqnarray}
 \label{eq:ineq_unbound1}
        \Real\left\{ \sum_{k>0} \frac{k \rho^{2k}}{1-\rho^{2k}} \left(B\bar{B} + 2 \bar{B} (\bar{a}_{k}+ a_{-k})  +(a_k+\bar{a}_{-k}) (a_{-k}+\bar{a}_k) \right) \right\}<0.
\end{eqnarray}
The left hand side may be bounded below by applying the Cauchy--Schwarz inequality. It follows that (\ref{eq:ineq_unbound1}) implies the weakened inequality
\begin{equation}\label{eq:ineq_unboundFinal}
         \sum_{k>0} \frac{k \rho^{2k}}{1-\rho^{2k}} \left(|B|^2 - 2 |B| |\bar{a}_{k}+ a_{-k}|  +|a_k+\bar{a}_{-k}|^2 \right)<0,
\end{equation}
which is impossible since the left side is positive semi-definite. Therefore, it is true that $C_a(\rho)\geq C_a(0)$.

\section{Kelvin's Theorem With Circulation}\label{App_kelvCirc}
The main assumption in the proof of Theorem \ref{thm1} was that the velocity potential $\phi$ was single-valued. This assumption implies that the corresponding irrotational flow $\boldsymbol{v}=\bnabla \phi$ has zero circulation around any closed loop. Here, we will demonstrate how Kelvin's energy theorem, as presented in Theorem \ref{thm1}, may be generalized in the presence of non-zero circulation. Note that a statement resembling the theorem to be proved here was given in words by \citet[\S 56]{Lamb1975}, but not proved there. 

We consider a flow in domain $D$ as described in Theorem \ref{thm1}, and we define the boundary domain $B\equiv \mathbb{R}^2\backslash D$. Suppose that $B$ is comprised of $N$ disconnected parts labelled $B_i$, each containing some circulation so that we can write the complex potential function as 
\begin{equation}\label{eq:pot_circ}
    W_{\Gamma}(z)=\tilde{W}(z)+\mathrm{i}\sum_{k=1}^N\frac{\Gamma_k}{2\pi} \log{\left(z-z_k\right)},
\end{equation}
where $z_k \in B_k$ and $\tilde{W}(z)$ is a single valued function over $D$. $\Gamma_k$ gives the circulation around the body $B_k$. The multivaluedness of $\phi=\mathrm{Re}\{W\}$ comes from the branch cuts of the logarithm terms. We now state and subsequently prove a generalization of Theorem \ref{thm1}, applicable to a general potential of the form given in (\ref{eq:pot_circ}).

\begin{theorem}\label{thm3}
Consider an ideal fluid domain $D\subseteq \mathbb{R}^2$, with velocity field $\boldsymbol{v}(\boldsymbol{x})=\bnabla \phi(\boldsymbol{x})$ satisfying $\nabla^2\phi=0$ with no-flux boundary conditions on $\partial D$. Further assume the boundary $B\equiv \mathbb{R}^2 \backslash D$ may be subdivided into $N$ disjoint regions $B_1,B_2,...,B_N$. Also assume $\phi=\mathrm{Re}\{W\}$ to single-valued up to some amount of circulation around each boundary such that the complex potential is written as,
\begin{equation}
        W(z)=\tilde{W}(z)+\mathrm{i}\sum_{k=1}^N\Gamma_k \log{\left(z-z_k\right)},
\end{equation}
for some circulations $\Gamma_k\in \mathbb{R}$ and where $z_k\in B_k$. If the fluid domain $D$ extends to infinity, it should be further assumed that $\sum_{k=1}^N \Gamma_k=0$, so that the energy intergal converges at infinity.
%That is, $\vec{v}\cdot \hat{n}=\vec{U}\cdot \hat{n}$ for $\vec{x}\in \partial D$ where $\vec{U}$ is the boundary velocity and $\hat{n}$ the surface unit normal. 
Further assume that the volume of each boundary does not change in time and the velocity vanishes at infinity. Then any other incompressible flow $\boldsymbol{q}(\boldsymbol{x})$, satisfying the boundary conditions on $\partial D$, possesses a kinetic energy greater than or equal to that of $\boldsymbol{v}$,
\begin{equation}\label{eq:theoremkelvin_circ}
     \frac{1}{2} \rho \int_{D}||\boldsymbol{v}||^2 dV \leq \frac{1}{2} \rho \int_{D}||\boldsymbol{q}||^2 dV,
\end{equation}
provided that
\begin{equation}\label{eq:theoremkelvin_circ_cond}
     \sum_{k=1}^{N-1}\left(\sum_{j=1}^{k}\Gamma_j\right)\int_{L_k}\left( \boldsymbol{q}- \boldsymbol{v}\right)\cdot \boldsymbol{n}dA\geq 0,
\end{equation}
where $L_k$ is a line connecting $z_k$ to $z_{{k+1}}$; $L_k$ are chosen not to intersect one another.
As in Theorem \ref{thm1}, we assume the bulk flow $||\boldsymbol{v}||^2$ contains no non-integrable singularities.
\end{theorem}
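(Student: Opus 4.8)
The plan is to follow the proof of Theorem \ref{thm1} verbatim for as long as possible and then isolate the single place where the multivaluedness of $\phi$ intervenes. I would again write $\boldsymbol{q}=\boldsymbol{v}+\boldsymbol{w}$, where $\boldsymbol{w}=\boldsymbol{q}-\boldsymbol{v}$ is a bona fide single-valued velocity field in $D$, hence continuous throughout $D$, satisfies $\boldsymbol{w}\cdot\boldsymbol{n}=0$ on $\partial D$ because $\boldsymbol{q}$ and $\boldsymbol{v}$ meet the same boundary data, and is divergence-free by incompressibility. The identity $\|\boldsymbol{q}\|^2-\|\boldsymbol{v}\|^2=2\bnabla\phi\cdot\boldsymbol{w}+\|\boldsymbol{w}\|^2$ reduces the energy difference to the manifestly non-negative term $\frac{1}{2}\rho\int_D\|\boldsymbol{w}\|^2\,dV$ plus the cross term $\rho\int_D\bnabla\phi\cdot\boldsymbol{w}\,dV$. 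In Theorem \ref{thm1} the cross term was converted by the divergence theorem into a boundary integral over $\partial D$ that vanished; the entire novelty here is that $\phi$ is no longer single-valued, so the divergence theorem cannot be applied on $D$ as it stands. Note that the logarithmic singularities sit at $z_k\in B_k$, inside the bodies, so $\boldsymbol{v}$ is smooth in the fluid bulk and the bulk integrals remain unproblematic.

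Next I would render $\phi$ single-valued by excising the cuts $L_1,\dots,L_{N-1}$ of the statement, with $L_k$ joining $z_k$ to $z_{k+1}$ and the cuts pairwise non-intersecting; only the portion $L_k\cap D$ lies in the fluid. Because $\sum_k\Gamma_k=0$ when $D$ is unbounded, these $N-1$ cuts chain all $N$ singularities into one connected obstacle and no cut need run to infinity, so $\phi$ is single-valued on $D'\equiv D\setminus\bigcup_k L_k$. Applying the divergence theorem to $\bnabla\cdot(\phi\boldsymbol{w})$ on $D'$ and using $\bnabla\cdot\boldsymbol{w}=0$ gives $\rho\int_D\bnabla\phi\cdot\boldsymbol{w}\,dV=\rho\int_{\partial D'}\phi\,\boldsymbol{w}\cdot\boldsymbol{n}\,dA$. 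The boundary $\partial D'$ consists of $\partial D$, on which $\boldsymbol{w}\cdot\boldsymbol{n}=0$, together with the two faces of each cut.

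The crux is to collapse the two-sided cut contributions. Since $\boldsymbol{w}$ is continuous across each $L_k$ while the two outward normals are opposite and $\phi$ differs across the cut by its constant monodromy jump $[\phi]_{L_k}$, the paired faces combine into $[\phi]_{L_k}\int_{L_k}\boldsymbol{w}\cdot\boldsymbol{n}\,dA$. To evaluate $[\phi]_{L_k}$ I would read off the monodromy of $W=\tilde{W}+\mathrm{i}\sum_k\Gamma_k\log(z-z_k)$: a loop crossing $L_k$ once and no other cut encircles exactly $B_1,\dots,B_k$, so $\phi$ jumps in proportion to the enclosed total circulation $\sum_{j=1}^k\Gamma_j$. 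Substituting, the cross term becomes a multiple of $\sum_{k=1}^{N-1}\left(\sum_{j=1}^k\Gamma_j\right)\int_{L_k}(\boldsymbol{q}-\boldsymbol{v})\cdot\boldsymbol{n}\,dA$, which is precisely the left-hand side of hypothesis (\ref{eq:theoremkelvin_circ_cond}); with the orientations fixed so that this multiple is positive, the hypothesis makes the cross term non-negative, and adding the non-negative $\|\boldsymbol{w}\|^2$ term yields (\ref{eq:theoremkelvin_circ}).

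I expect the main obstacle to be the sign and orientation bookkeeping: I must fix, consistently, an orientation and normal $\boldsymbol{n}$ on each $L_k$ and a branch for each logarithm so that the constant multiplying $\sum_{j=1}^k\Gamma_j$ in $[\phi]_{L_k}$ is genuinely positive and agrees with the sense of the flux in (\ref{eq:theoremkelvin_circ_cond}); a sign error here would flip the direction of the required inequality. A lesser technical point is justifying the divergence theorem on $D'$ at the corners where the cuts meet $\partial D$ and at the cut endpoints on $\partial B_k$, but since $\boldsymbol{w}$ is bounded and $\phi$ has only a bounded jump there, excising small neighbourhoods and letting their size vanish should dispatch this cleanly.
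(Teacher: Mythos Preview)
Your proposal is correct and matches the paper's approach essentially line for line: the same decomposition $\boldsymbol{q}=\boldsymbol{v}+\boldsymbol{w}$, the same reduction to the cross term $\int_D\bnabla\phi\cdot\boldsymbol{w}\,dV$, the same device of chaining the branch cuts $L_1,\dots,L_{N-1}$ between consecutive bodies so that the cut to infinity cancels by $\sum_k\Gamma_k=0$, and the same identification of the two-sided cut contribution as the jump $\sum_{j\le k}\Gamma_j$ times the flux of $\boldsymbol{w}$ through $L_k$. Your explicit flagging of the orientation/sign bookkeeping and of the corner issue where cuts meet $\partial D$ is, if anything, more careful than the paper, which simply fixes the branch-cut chain pictorially and does not discuss the endpoints.
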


It is clear that in the case that $\Gamma_i=0$ for all $i$, Theorem \ref{thm1} is reproduced. An alternative condition for the applicability of (\ref{eq:theoremkelvin_circ}) is that the flux through each line $L_i$ is equal for the flows being compared, $\boldsymbol{q}$ and $\boldsymbol{v}$. In that case the integrals in (\ref{eq:theoremkelvin_circ_cond}) all vanish. This condition is reminiscent of that proposed by \cite{saad2017}, who generalized Kelvin's minimum theorem to simply connected domains with porous boundaries, while assuming a single-valued potential. We now prove Theorem \ref{thm3}.
\begin{proof}
The flow is decomposed according to $\boldsymbol{q}=\boldsymbol{v}+\boldsymbol{w}$, where $\boldsymbol{w}$ is finite in $D$. We then find that $(||\boldsymbol{q}||^2-||\boldsymbol{v}||^2)=2\bnabla \phi \cdot \boldsymbol{w}+||\boldsymbol{w}||^2$. The integrand is integrable and the product rule and divergence theorem apply, yielding
\begin{equation}\label{eq:theoremkelvin_proof_circ}
    \frac{1}{2} \rho \int_{D}\left(||\boldsymbol{q}||^2-||\boldsymbol{v}||^2\right) dV =    \rho\left(\int_{C}\phi \boldsymbol{w\cdot n}dA-\int_{D} \phi \bnabla \cdot \boldsymbol{w} dV\right)+\frac{1}{2} \rho \int_{D} ||\boldsymbol{w}||^2dV.
\end{equation}
The second term on the right hand side vanishes by the incompressibility condition on $\boldsymbol{q}$ and the last term is non-negative. It is left to show that the first term on the right hand side is greater or equal to zero. The contour $C$, over which that integral is taken, is simply the boundary $\partial D$ if $\phi$ is a single valued function. Otherwise, it must be chosen carefully so as to not cross branch cuts in order for the divergence theorem to be applicable (see figure \ref{fig:circulation}B). If the net sum of circulations vanishes, as we assumed, we can always choose the branch points to connect to one another in the following manner so that there are $N-1$ distinct branch cuts for $N$ disconnected boundaries. 

First, label each disconnected part of the boundary, $B_1,B_2, ... B_N$. Beginning with $B_1$, choose its branch cut, which begins at $z_1$, to pass through the branch point inside $B_2$; label this line $L_1$. The discontinuity across $L_1$ is $\Gamma_1$. Next, take both the branch cuts associated with $B_1$ and $B_2$ to extend to $z_3 \in B_3$. We choose the the branches of $B_1$ and $B_2$ to coincide on a line labelled $L_2$, which then has a discontinuity $\Gamma_1+\Gamma_2$. Repeating this procedure, we arrive at the line $L_{N-1}$ connecting $B_{N-1}$ to $B_N$, whose discontinuity is give by $\sum_{j=1}^{N-1}\Gamma_j$. Taking now all the branches to extend to complex infinity along the same line $L_N$, the associated discontinuity is $\sum_{j=1}^{N}\Gamma_j=0$, and the branch cut cancels. This procedure for choosing branch cuts is is illustrated for $N=3$ in figure \ref{fig:circulation}. The integral around $C$ in (\ref{eq:theoremkelvin_proof_circ}) vanishes by the no flux condition, except along the lines $L_1, L_2, ..., L_{N-1}$.

The contributions of those integrals come from the fact that the logarithm is $2\pi$-discontinuous across each branch cut. On the $L_1$ segment, the contribution to the integral is $\rho\Gamma_1\int_{L_1}\boldsymbol{w\cdot n}dA$. Physically, this is the flux of $\boldsymbol{w}=\boldsymbol{q}-\boldsymbol{v}$ through $L_1$ times the discontinuity across the branch cut. Similarly, the contribution from segment $L_k$ is given by the integral $\rho\left(\sum_{j=1}^{k-1}\Gamma_j\right)\int_{L_k}\boldsymbol{w\cdot n}dA$; physically, this represents the flux $\boldsymbol{w}$ through $L_k$ times the discontinuity across $L_k$. Summing over all segments, we find
\begin{equation}\label{eq:gamm_cond}
\rho\int_{C}\phi \boldsymbol{w\cdot n}dA=\rho\sum_{i=1}^{N-1}\left(\sum_{j=1}^{i-1}\Gamma_j\right)\int_{L_i}\boldsymbol{w\cdot n}dA.
\end{equation}
We require the right hand side to be greater or equal to zero, for (\ref{eq:theoremkelvin_circ}) to hold.

\begin{figure}
  \centerline{\includegraphics[height=5.3cm]{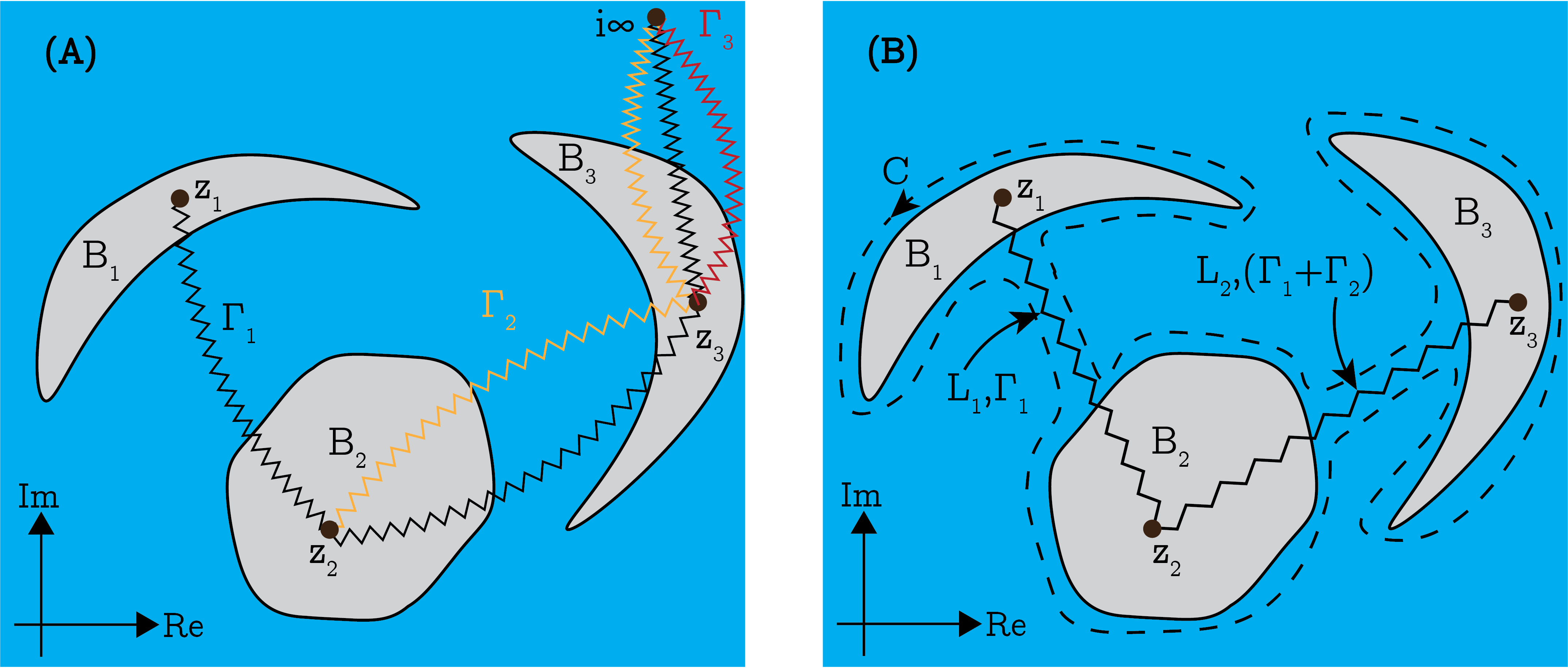}}
  \caption{Branch cut choices in the case $N=3$. In panel (A), we show a possible choice of branch cuts when each body $B_i$ possesses a circulation $\Gamma_i$. Branch cuts are drawn as zigzag lines, and labelled with its associated discontinuity. Each colour represents a single branch cut with a particular discontinuity. In panel (B), we take the branch cuts to overlap so that there are 2 distinct branches. The branch connecting to complex infinity is cancelled since $\Gamma_1+\Gamma_2+\Gamma_3=0$. The integration contour $C$ around the branch cuts is drawn.}
\label{fig:circulation}
\end{figure}

\end{proof}
\newpage
\bibliographystyle{jfm}
% Note the spaces between the initials
\bibliography{manuscript_text}

\end{document}